\documentclass[twocolumn,10pt]{IEEEtran}

\usepackage{braket}

\input{def.tex}
\usepackage{dsfont}
\usepackage{minibox}
\DeclareSymbolFont{matha}{OML}{txmi}{m}{it}
\DeclareMathSymbol{\varv}{\mathord}{matha}{118}
\usepackage{eurosym}
\usepackage{hhline,physics}

\usepackage{multicol}

\makeatletter

\pagestyle{empty}

\IEEEoverridecommandlockouts

\begin{document}
	\title{{\huge Asymptotic Analysis of Max-Min Weighted SINR for IRS-Assisted MISO Systems with Hardware Impairments}}
	\author{Anastasios Papazafeiropoulos, Cunhua Pan, Ahmet Elbir, Van  Nguyen, Pandelis Kourtessis, Symeon Chatzinotas \thanks{A. Papazafeiropoulos is with the Communications and Intelligent Systems Research Group, University of Hertfordshire, Hatfield AL10 9AB, U. K., and with SnT at the University of Luxembourg, Luxembourg. C. Pan is with the School of Electronic Engineering and Computer Science at Queen Mary University of London, London E1 4NS, U.K. A. M. Elbir is with the Department of Electrical and Electronics Engineering, Koc University, Istanbul, Turkey, and SnT at the University of
			Luxembourg, Luxembourg. P. Kourtessis is with the Communications and Intelligent Systems Research Group, University of Hertfordshire, Hatfield AL10 9AB, U. K. V.-D. Nguyen and S. Chatzinotas are with the SnT at the University of Luxembourg, Luxembourg. E-mail: tapapazaf@gmail.com. This work was supported by Luxembourg National Research Fund (FNR) under the CORE project RISOTTI C20/IS/14773976 and  by the University of Hertfordshire’s 5–year Vice Chancellor’s Research Fellowship.
			
		}}	
	\maketitle

	\begin{abstract}
		We focus on the realistic maximization of the uplink minimum signal-to-interference-plus-noise ratio (SINR) of a general multiple-input single-output (MISO) system assisted by an intelligent reflecting surface (IRS) in the large system limit accounting for HIs. In particular, we introduce the HIs at both the IRS (IRS-HIs) and the transceiver HIs (AT-HIs), usually neglected despite their inevitable impact. Specifically, the deterministic equivalent analysis enables the derivation of the asymptotic weighted maximum-minimum SINR with HIs by jointly optimizing the HIs-aware receiver, the transmit power, and the reflect beamforming matrix (RBM). Notably, we obtain the optimal power allocation and reflect beamforming matrix with low overhead instead of their frequent necessary computation in conventional MIMO systems based on the instantaneous channel information. Monte Carlo simulations verify the analytical results which show the insightful interplay among the key parameters and the degradation of the performance due to HIs.
	\end{abstract}
	
	\begin{keywords}
		Intelligent reflecting surface, hardware impairments, massive MIMO systems, deterministic equivalents, beyond 5G networks.
	\end{keywords}
	\section{Introduction}
	Intelligent reflecting surfaces (IRSs), consisted of low-cost, passive reflecting elements with adjustable phase shifts, have been recognized as a promising solution for enhancing the spectral and energy efficiency of wireless systems \cite{Basar2019}. Notably, a significant amount of research has been devoted to IRS-aided  systems \cite{Wu2019a,Elbir2020,Kammoun2020,Papazafeiropoulos2021a,VanChien2021}. For example, in \cite{Wu2019a}, a minimization of the transmit power at the base station (BS) subject to individual signal-to-interference-plus-noise ratio (SINR) constraints took place to address the  transmit and reflect beamforming (RB) optimization problem. Also, in \cite{Kammoun2020}, the optimum linear precoder (OLP) was studied in the large number of antennas regime. 
	
	However, the majority of works in IRS-aided systems have relied on the highly unrealistic assumption of ideal hardware, while practical implementations of IRSs require taking into account the unavoidable residual transceiver hardware impairments (T-HIs) 
	whose omission may result in misleading design conclusions. Especially, a cost-attractive implementation of massive multiple-input multiple-output (mMIMO) systems suggests the use of cheap hardware, resulting in more severe HIs \cite{Bjoernson2014,Papazafeiropoulos2017a,Papazafeiropoulos2017b}. One major category of T-HIs, known as additive transceiver HIs (AT-HIs), can be modeled as additive Gaussian distributed noise by accounting for the accumulated effect from all individual HIs such as the in-phase/quadrature-phase imbalance  \cite{Studer2010,Bjoernson2014,Papazafeiropoulos2017a,Papazafeiropoulos2017b}. Another interesting type of HIs, called IRS-HIs, emerges in IRS-assisted systems because of the incapability of infinite precision at the IRS phase shifts \cite{Badiu2019,Li2020,Papazafeiropoulos2021}. Hence, the performance analysis of the IRS-aided systems should include the impact of both AT-HIs and IRS-HIs. 
	
	Recently, the study of HIs on IRSs has attracted significant interest \cite{Xing2020,Liu2020, Zhou2020,Shen2021,Zhou2020a,Papazafeiropoulos2021}. The authors in \cite{Xing2020} focused on the achievable rate  while they considered only single‐input and single‐output (SISO) channels. In \cite{Liu2020}, upper bounds on the channel capacities were obtained while relying on the assumption of no correlation among the columns of the IRS, assuming a single user communication, and not providing closed forms regarding these bounds. In \cite{Zhou2020}, no small-scale fading was assumed and just a single user equipment (UE) was considered. The latter limiting design setting was also assumed in \cite{Shen2021} to maximize the received signal-to-noise ratio (SNR)   while, in \cite{Zhou2020a}, the authors focused on the maximization of the secrecy rate for a finite number of BS antennas. In this direction, in \cite{Papazafeiropoulos2021}, we studied the impact of HIs on the achievable rate in a multi-user setting for a finite number of BS antennas without optimizing the transmit power.
	
	In this paper, we make a substantial leap beyond previous works by accounting for both AT-HIs and IRS-HIs in terms of the deterministic equivalent (DE) analysis by formulating a max-min weighted SINR problem in the large antenna regime. \footnote{There are many differences between our work and existing works, e.g., \cite{Shen2021}. Therein, first,  just the SNR was studied for a single UE scenario while we focus on the SINR accounting for multiple UEs.  Second, we consider the max-min optimization problem, while \cite{Shen2021} considered the rate maximization problem. Third, \cite{Shen2021} considered the phase shift design based on instantaneous CSI, while our work designs the phase shift based on the statistical CSI. Fourth, we consider the asymptotic case when the number of transmit antennas is infinite, while \cite{Shen2021} is suitable for a limited number of antennas due to its high complexity when the number of transmit antennas is large.}Contrary to \cite{Kammoun2020}, we introduce both AT-HIs and IRS-HIs, and focus on the uplink instead of the downlink. Therein, an already obtained OLP, based on \cite{Cai2011}, was applied which limits the analysis while our methodology, taking HIs into consideration, is more general in terms of DEs and optimization as the following analysis reveals, which adds to the novelty of this work. For example, we obtain the optimal decoder and the optimal power allocation with HIs by following another approach than \cite{Cai2011} that was based on uplink-downlink duality. Also, we have taken into account the direct channel while its manipulation was not possible by the analysis in \cite{Kammoun2020}. It should be noted that the introduction of HIs, increasing the complexity/difficulty, requires substantial manipulations.
	Specifically, by considering the general realistic scenario of correlated Rayleigh fading channels with HIs, we obtain the optimal HIs-aware linear minimum mean square error (LMMSE) receiver and the corresponding asymptotic optimal weighted SINR. We achieve optimal power allocation and an RB design that require only large-scale channel statistics and do not depend on small-scale fading changing at the order of milliseconds that would result in prohibitively high overhead. 
	The results allow shedding light on the impact of HIs on such systems towards their realistic evaluation.

	\section{System Model}\label{System}
	We consider an IRS-aided multi-user massive MIMO system, where a multi-antenna BS with $ M $ antennas communicates with $K $ single-antenna UEs. To focus on the impact of hardware distortions, we assume perfect channel state information (CSI), which allows more direct mathematical manipulations. Hence, the results play the role of upper bounds of practical scenarios with imperfect CSI. Note that the CSI could be assumed perfectly known when the coherence intervals are sufficiently long. The extension to the imperfect CSI scenario, which is of practical importance, is the topic of future work. In particular, the IRS is assumed in the line-of-sight (LoS) of the BS and includes $ N $ passive reflecting elements introducing shifts on the phases of the impinging waves. Also, the IRS is controlled by the BS by means of a backhaul link.
	
	We rely on a block-fading channel model with fixed channels in each time-frequency coherence block but with independent realizations in each block. Specifically, $ \bH_{1}=[\bh_{1,1}\ldots,\bh_{1,N} ] \in \mathbb{C}^{M \times N}$, $ \bh_{\mathrm{d},k} \in \mathbb{C}^{M \times 1} $, and $ \bh_{2,k} \in \mathbb{C}^{N \times 1}$ express the LoS channel between the BS and IRS, the direct channel between the BS and UE $ k $, and the channel between the IRS and UE $ k $. The vector $ \bh_{1,i} $ for $ i=1,\ldots,N $ corresponds to the $ i $th column vector of $ \bH_{1}$. Notably, we consider spatial correlation instead of independent Rayleigh fading assumed in the majority of previous works, e.g., \cite{Wu2019a}. Hence, we have
	\begin{align}\bh_{\mathrm{d},k}&=\sqrt{\beta_{\mathrm{d},k}}\bR_{\mathrm{BS},k}^{1/2}\bz_{\mathrm{d},k},\\
		\bh_{2,k}&=\sqrt{\beta_{2,k}}\bR_{\mathrm{IRS},k}^{1/2}\bz_{2,k},
	\end{align}
	where $ \bR_{\mathrm{BS},k} \in \mathbb{C}^{M \times M} $ with $ \tr\left(\bR_{\mathrm{BS},k} \right)=M $ and $ \bR_{\mathrm{IRS},k} \in \mathbb{C}^{N \times N} $ with $ \tr\left(\bR_{\mathrm{IRS},k} \right)=N $ express the deterministic Hermitian-symmetric positive semi-definite correlation
	matrices at the BS and the IRS, respectively. Also, $ \beta_{\mathrm{d},k} $ and $ \beta_{2,k} $ are the path-losses of the BS-UE $ k $ and IRS-UE $ k $ links. Note that the correlation matrices and the path-losses are assumed to be known by practical methods, {e.g., see} \cite{Neumann2018}. In addition, $ \bz_{\mathrm{d},k}\sim \mathcal{CN}\left(\b0,\Id_{M}\right) $ and $ \bz_{2,k}\sim \mathcal{CN}\left(\b0,\Id_{N}\right) $ express the respective fast-fading vectors. Moreover, we assume that $ \bH_{1} $ is a full rank channel matrix described as 
	\begin{align}
		[\bH_{1}]_{m,n}&=\sqrt{\beta_{1}} \exp \Big(j \frac{2 \pi }{\lambda}\left(m-1\right)d_{\mathrm{BS}}\sin \theta_{1,n}\sin \psi_{1,n}\nn\\
		&+\left(n-1\right)d_{\mathrm{IRS}}\sin \theta_{2,m}\sin \psi_{2,m}\Big),
	\end{align}
	where $ \lambda $ and $ \beta_{1} $ are the carrier wavelength and the path-loss between the BS and IRS, while $ d_{\mathrm{BS}} $ and $ d_{\mathrm{IRS}} $ are the inter-antenna separation at the BS and inter-element separation at the IRS, respectively \cite{Kammoun2020}. In addition, $ \theta_{1,n} $, $ \psi_{1,n} $ express the elevation and azimuth LoS angles of departure (AoD), respectively at the BS with respect to IRS element $ n $, while $ \theta_{2,n} $ and $ \psi_{2,n} $ are the elevation and azimuth LoS angles of arrival (AoA) at the IRS. The design of $ \bH_{1} $ could be realized by several techniques as suggested in \cite{Bohagen2007}. In addition, the response of the IRS elements is described by the diagonal matrix $ \bPhi=\mathrm{diag}\left(\al \exp\left(j \phi_{1}\right), \ldots, \al \exp\left(j \phi_{N}\right)\right)\in \mathbb{C}^{N\times N}$, where $ \phi_{n} \in \left[ 0, 2 \pi \right], n=1,\ldots,N$ and $ \al \in (0,1]$ express the phase shifts applied by the IRS elements and the independent amplitude reflection coefficient, respectively.\footnote{Recently, it was shown that the   amplitude and phase responses are intertwined \cite{Gradoni2021,Zhang2021}, which suggests an interesting idea for extension of the current work, i.e., to study  the impact of active (additive transceiver distortion) and passive (IRS phase noise) HIs by accounting for this intertwinement.}
	\subsubsection{IRS-HIs}	Since it is not possible to configure the IRS elements with infinite precision, phase errors are introduced \cite{Badiu2019}. These IRS-HIs can be described by means of a random diagonal phase error matrix consisting of $ N $ random phase errors, i.e., $ \widetilde{\bPhi} =\diag\left( e^{j \tilde{\phi}_{1}}, \ldots, e^{j \tilde{\phi}_{N}} \right)\in\mathbb{C}^{N\times N}$ with $ \tilde{\phi}_{i}, i=1,\ldots,N $ being the random phase errors of the IRS phase shifts that are i.i.d. randomly distributed in $ [-\pi, \pi) $ and based on a certain circular distribution. \footnote{The probability density function (PDF) of $\tilde{ \theta}_{i} $ is assumed symmetric with its mean direction equal is zero, i.e., $ \arg\left(\EE[\mathrm{e}^{j \tilde{\theta}_{i}}]\right)=0 $ \cite{Badiu2019}. 	} Hence, the channel vector between the BS and UE $ k $ is 	written as $ \bh_{k}=\bh_{\mathrm{d},k}+\bH_{1}\bPhi\tilde{\bPhi}\bh_{2,k}, \in \mathbb{C}^{M\times 1}$, distributed as $ \cC\cN\left( 0, \bR_{k} \right) $, where $ \bR_{k}=\beta_{\mathrm{d},k}\bR_{\mathrm{BS},k}+ \beta_{2,k}\bH_{1} \bPhi\tilde{\bR}_{\mathrm{IRS},k}\bPhi^{\H}\bH_{1}^{\H}$ with $ \tilde{\bR}_{\mathrm{IRS},k}= m^{2}\bR_{\mathrm{IRS},k}+\left(1-m^{2}\right)\Id_{N}$ and $ m $ denoting its characteristic function (CF) \cite[Eq. 12]{Papazafeiropoulos2021}.

	Examples of PDFs that could describe the phase noise on IRSs are the uniform and the Von Mises distributions \cite{Badiu2019}. The former expresses completely lack of knowledge and has CF equal to $0 $ while the latter has a zero-mean and concentration parameter $ \kappa_{\tilde{\theta}} $, 	capturing the accuracy of the estimation. Its CF is $ m= \frac{\mathrm{I}_{1}\!\left(\kappa_{\tilde{\theta}}\right)}{\mathrm{I}_{0}\!\left(\kappa_{\tilde{\theta}}\right)}$, where $ \mathrm{I}_{p}\!\left(\kappa_{\tilde{\theta}}\right)$ is the modified Bessel function of the first kind and order 	$ p $. 
	\begin{remark}\label{rem1}
		If $ m=0 $ (uniform distribution), we obtain $ \widetilde{\bR}_{\mathrm{IRS},k}=\Id_{N}$, which means that $ \bR_{k} $ does not depend on the phase shifts and the system cannot be optimized due to the IRS. Especially, we have $ \bR_{k} =\beta_{\mathrm{d},k}\bR_{\mathrm{BS},k}+\beta_{2,k}\bH_{1}\bH_{1}^{\H}$. In this case, no knowledge of $ \bR_{\mathrm{IRS},k} $ is required at the BS. This result is obtained also if ${\bR}_{\mathrm{IRS},k}=\Id_{N} $, which means that in the case of no IRS correlation, the IRS cannot be optimized if statistical CSI is considered. However, if the phase errors follow any other circular PDF, $ \bR_{k} $ is phase-dependent and the presence of IRS can be exploited.
	\end{remark}
	\subsubsection{AT-HIs} Disregarding most works in the IRS literature assuming ideal transceiver hardware, in practice, HIs, remain despite the use of any mitigation algorithms and affect both the transmit and receive signals. In this direction, we account for additive HIs at both the transmitter and the receiver (AT-HIs) being Gaussian distributed with average powers proportional to the average transmit and receive signals, respectively \cite{Studer2010}. The Gaussianity is a result of the aggregate contribution of many impairments. Notably, this model is not only analytically tractable but it is also experimentally
	validated \cite{Studer2010}. For instance, during the uplink, let $ \frac{p_{k}}{M}=\EE \{|x_{k}|^{2}\}$ be the transmit power from UE $ k $ transmitting signal $ x_{k} $. Then, the AT-HIs are described in terms of conditional Gaussian distributions as 
	\begin{align}
		\delta_{\mathrm{t},k}&\sim \cC\cN\left( 0, \Lambda_{k} \right)\!\label{eta_tU}\! \\
		\deltav_{\mathrm{r}}&\sim \cC\cN \left( \b0,\bm \Upsilon \right)\!\label{eta_rU},
	\end{align}
	where $ \Lambda_{k}= \kappa_{\mathrm{UE}}\frac{p_{k}}{M}$ and 
	$\bm \Upsilon =\kappa_{\mathrm{BS}}\sum_{i=1}^{K}\frac{p_{i}}{M} \mathrm{diag}( |h_{i,1}|^{2},\ldots,$ $|h_{i,M}|^{2} ) $ with $\kappa_{\mathrm{UE}}$ and $\kappa_{\mathrm{BS}}$ expressing the severity of the residual impairments at the transmitter and receiver side, respectively. For simplicity, all UEs are assumed with identical HIs, i.e., $\kappa_{\mathrm{UE}_{i}}=\kappa_{\mathrm{UE}}~\forall i$. The extension to different distortion at each UE is straightforward. Note that in the case $ \kappa_{\mathrm{UE}}=\kappa_{\mathrm{BS}}=0 $ and $ m=1 $, we result in the ideal scenario with no HIs.

	\section{Uplink Data Transmission with HIs}\label{PerformanceAnalysis}
	The received complex baseband signal by the BS is
	\begin{align}
		\by=\sum_{i=1}^{K}\bh_{i}\left(x_{i}+\delta_{\mathrm{t},i}\right) +\deltav_{\mathrm{r}}+\bw,\label{ULTrans}
	\end{align}
	where $ \delta_{\mathrm{t},i} $ and $\deltav_{\mathrm{r}} $ are the transmit and receive distortions given by \eqref{eta_tU} and \eqref{eta_rU}, respectively. Also, $\bw\sim \mathcal{CN}\left(\b0,\sigma^2\Id_{M}\right) $ is the receiver noise. The signal of UE $ k $, detected by the combining vector $ \bv_{k}\in \mathbb{C}^{M \times 1} $, can be expressed as $ 	\bv_{k}^{\H}\by $.
	\begin{lemma}
		The instantaneous uplink SINR of an IRS-assisted MIMO system with AT-HIs and IRS-HIs is given by
		\begin{align}
			\gamma_{k}=\frac{\frac{p_{k}}{M} \bv_{k}^{\H}\bh_{k}\bh_{k}^{\H} \bv_{k}}{ \bv_{k}^{\H}\!\left(\sum_{i\ne k}\frac{p_{i}}{M}\bh_{i}\bh_{i}^{\H} +\bC_{\!\delta_{\mathrm{t}}} +\bC_{\!\delta_{\mathrm{r}}}+\sigma^2 \Id_{M}\!\right)\! \bv_{k}},\label{SINR}
		\end{align}
		where $ \bC_{\!\delta_{\mathrm{t}}} = \kappa_{\mathrm{UE}}\sum_{i=1}^{K}\frac{p_{i}}{M} \bh_{i}\bh_{i}^{\H}$ and $ \bC_{\!\delta_{\mathrm{r}}}=\kappa_{\mathrm{BS}}\sum_{i=1}^{K}\frac{p_{i}}{M} \Id_{M}\!\circ\!\bh_{i}\bh_{i}^{\H}$.
	\end{lemma}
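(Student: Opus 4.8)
The plan is to compute the instantaneous SINR directly from the combiner output $\bv_{k}^{\H}\by$, conditioned on the channel realizations $\{\bh_{i}\}_{i=1}^{K}$ and averaging over the data symbols, the transmit/receive distortions, and the thermal noise. Substituting \eqref{ULTrans}, I would first write
\begin{align}
\bv_{k}^{\H}\by=\bv_{k}^{\H}\bh_{k}x_{k}+\sum_{i\neq k}\bv_{k}^{\H}\bh_{i}x_{i}+\sum_{i=1}^{K}\bv_{k}^{\H}\bh_{i}\delta_{\mathrm{t},i}+\bv_{k}^{\H}\deltav_{\mathrm{r}}+\bv_{k}^{\H}\bw,\nn
\end{align}
and single out $\bv_{k}^{\H}\bh_{k}x_{k}$ as the useful part. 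Conditioned on $\{\bh_{i}\}$, the five terms are mutually uncorrelated: the symbols are independent and zero mean, while $\delta_{\mathrm{t},i}$, $\deltav_{\mathrm{r}}$ and $\bw$ are conditionally zero mean, mutually independent and independent of the symbols, so every cross term vanishes in expectation. Hence the conditional power of $\bv_{k}^{\H}\by$ is the sum of the five individual powers, and $\gamma_{k}$ is the useful power divided by the sum of the remaining four.

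I would then evaluate the four conditional powers in turn. The useful term gives $\tfrac{p_{k}}{M}\,\bv_{k}^{\H}\bh_{k}\bh_{k}^{\H}\bv_{k}$, i.e.\ the numerator of \eqref{SINR}. The inter-user term contributes $\sum_{i\neq k}\tfrac{p_{i}}{M}\,\bv_{k}^{\H}\bh_{i}\bh_{i}^{\H}\bv_{k}$. For the transmit distortion, \eqref{eta_tU} gives $\EE\{|\delta_{\mathrm{t},i}|^{2}\}=\Lambda_{i}=\kappa_{\mathrm{UE}}\tfrac{p_{i}}{M}$, and together with independence across $i$ this yields power $\kappa_{\mathrm{UE}}\sum_{i=1}^{K}\tfrac{p_{i}}{M}\,\bv_{k}^{\H}\bh_{i}\bh_{i}^{\H}\bv_{k}=\bv_{k}^{\H}\bC_{\delta_{\mathrm{t}}}\bv_{k}$; here I would stress that the sum runs over \emph{all} $i$, including $i=k$, because the residual distortion of UE $k$'s own transmit chain is not part of the clean symbol $x_{k}$. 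The thermal noise contributes $\sigma^{2}\bv_{k}^{\H}\bv_{k}=\bv_{k}^{\H}(\sigma^{2}\Id_{M})\bv_{k}$.

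It remains to handle the receive distortion. From \eqref{eta_rU} its conditional covariance is $\bm\Upsilon=\kappa_{\mathrm{BS}}\sum_{i=1}^{K}\tfrac{p_{i}}{M}\,\diag(|h_{i,1}|^{2},\ldots,|h_{i,M}|^{2})$, so its power after combining is $\bv_{k}^{\H}\bm\Upsilon\bv_{k}$. The one genuinely non-routine observation is the compact rewriting of the per-antenna distortion profile: since $[\bh_{i}\bh_{i}^{\H}]_{mm}=|h_{i,m}|^{2}$ and the Hadamard product with $\Id_{M}$ retains exactly the diagonal entries, $\diag(|h_{i,1}|^{2},\ldots,|h_{i,M}|^{2})=\Id_{M}\circ\bh_{i}\bh_{i}^{\H}$, whence $\bm\Upsilon=\bC_{\delta_{\mathrm{r}}}$ and the receive-distortion power is $\bv_{k}^{\H}\bC_{\delta_{\mathrm{r}}}\bv_{k}$.

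Collecting the numerator over the sum of the four denominator contributions then yields \eqref{SINR}. I expect the main obstacle to be purely bookkeeping rather than anything analytically deep: one must keep the $i=k$ terms inside both $\bC_{\delta_{\mathrm{t}}}$ and $\bC_{\delta_{\mathrm{r}}}$ (so that the self-interference induced by UE $k$'s own distorted transmit and receive chains does enter the denominator, unlike the clean symbol $x_{k}$), and one must spot the Hadamard-product representation $\Id_{M}\circ\bh_{i}\bh_{i}^{\H}$ of the receive-distortion covariance blocks.
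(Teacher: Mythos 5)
Your proposal is correct, and the moment computations (the conditional uncorrelatedness of the five terms, the inclusion of the $i=k$ terms in both $\bC_{\!\delta_{\mathrm{t}}}$ and $\bC_{\!\delta_{\mathrm{r}}}$, and the identity $\diag\left(|h_{i,1}|^{2},\ldots,|h_{i,M}|^{2}\right)=\Id_{M}\circ\bh_{i}\bh_{i}^{\H}$ that turns $\bm\Upsilon$ into $\bC_{\!\delta_{\mathrm{r}}}$) are exactly the bookkeeping that underlies the lemma. However, your route differs from the paper's. The paper does not compute the power ratio directly; it invokes the worst-case uncorrelated additive noise theorem of Bj\"ornson et al.\ to assert that, for a given realization of $\bH$, the mutual information satisfies $\EE_{\bH}\{\mathcal{I}(x_{k};\bv_{k}^{\H}\by)\}\ge\EE_{\bH}\{\log_{2}(1+\gamma_{k})\}$ with $\gamma_{k}$ as in \eqref{SINR}. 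The distinction matters: your argument establishes \eqref{SINR} as a ratio of conditional second moments, which proves the formula as stated, but it does not by itself justify that $\log_{2}(1+\gamma_{k})$ is an achievable spectral efficiency --- the distortions $\delta_{\mathrm{t},i}$ and $\deltav_{\mathrm{r}}$ are only uncorrelated with (not independent of) the data and have channel-dependent statistics, so Shannon's Gaussian formula does not apply directly and the worst-case-noise argument is what converts your power ratio into an operational rate guarantee, which is how $\gamma_{k}$ is used in the rest of the paper (e.g., the rate $\log_{2}(1+\bar{\tau})$ plotted in Section~\ref{Numerical}). In short: your computation is the explicit, elementary core that the paper leaves implicit, while the paper's citation buys the information-theoretic meaning of the resulting SINR; a complete treatment would combine the two.
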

	\begin{proof}
		Given that the AT-HIs are Gaussian distributed and uncorrelated with the transmit signals, we make use of the worst-case uncorrelated
		additive noise theorem in \cite{Bjoernson2017} to obtain a lower bound of the mutual information $ \mathcal{I} $ between the input $ x_{k} $ and output $ \bv_{k}^{\H}\by $ for a given channel realization $ \bH=[\bh_{1}, \ldots, \bh_{K}]\in \mathbb{C}^{M \times K } $ as $ \EE_{\bH}\{\mathcal{I}\left(x_{k};\bv_{k}^{\H}\by\right)\} \ge\EE_{\bH}\{ \log_{2}\left(1+\gamma_{k}\right)\} $, where $ \EE_{\bH}\{\mathcal{I}\left(x_{k};\bv_{k}^{\H}\by\right)\} $ expresses the ergodic achievable SE with $ \EE_{\bH}\{\cdot\} $ denoting the expectation with respect to $ \bH $, and $ \gamma_{k} $ is given by \eqref{SINR} with $ \circ $ in $ \bC_{\!\delta_{\mathrm{r}}} $ denoting the Hadamard product. 
	\end{proof}
	
	\subsection{Problem Formulation}
	The focal point of this work is the max-min weighted uplink SINR under a weighted sum-power constraint.
	The optimization problem is described as
	\begin{align}
		\!\!	(\mathcal{P}1)~&\max_{ \bV, \bp, \bPhi} \min_{k} \frac{\gamma_{k}\left(\bV, \bp, \bPhi\right)}{\eta_{k}}\label{Maximization1} \\
		&~\mathrm{s.t.}~~~~~~~~\frac{1}{M}\betav^{\T}\bp \le p_{\mathrm{max}},~p_{k}>0,~ \|\bv_{k}\|=1, \forall k\label{Maximization3} \\
		&~~~~~~~~~~~~~|\phi_{i}|=1, ~~\forall i \in \{1, \ldots,N\},\label{Maximization4} 
	\end{align}
	where $\bV=[\bv_{1}, \ldots, \bv_{K}]$ and $ \bp=[p_{1}, \ldots, p_{K}]^{\T}$ are the tuple of receive beamforming matrices and the transmit power vector, respectively. Also, $ p_{\mathrm{max}} $ denotes the given power constraint while $ \etav=[\eta_{1}, \ldots, \eta_{K}]^{\T}$ and $\betav=[\beta_{1}, \ldots, \beta_{K}]^{\T}$ with $ \eta_{k} $ and $ \beta_{k} $ expressing the priority assigned to UE $ k $ and the weight associated with $ p_{k} $. 
	
	At optimality, the power coefficients are obtained based on the property that the weighted SINRs for different UEs are identical, i.e., $ \frac{\gamma_{1}}{\eta_{1}}=\ldots= \frac{\gamma_{K}}{\eta_{K}}=\tau^{\star} $\cite{Cai2011}. As a result, the SINR constraint for UE $ k $ is written as
	\begin{align}
		\gamma_{k}\left(\bV, \bp, \bPhi\right)\ge \eta_{k} \tau^{\star} ~~~~~\forall k \label{SINRconstraint}
	\end{align}

	\section{Proposed Design}
	The problem in $ (\mathcal{P}1) $ is non-convex and the coupling among the optimization variables (the active and passive beamforming at the BS and IRS, respectively, and the power control) raises difficulties to solve. We tackle them by following the common  alternating optimization in two stages. The notable difference here is that we take HIs and correlated fading into account. In particular, first, for any given $ \bPhi $, we provide the optimal linear receiver design in terms of the optimal decoders and the optimal allocated power. Then, we consider the IRS design.
	
	\begin{proposition}
		Given the  RBM $ \bPhi $ and the power vector $ \bp $, the uplink SINR of an IRS-assisted MIMO system with  AT-HIs and IRS-HIs is maximized by the HIs-aware LMMSE receiver
		\begin{align}
			\bv_{k}^{\star}= \!{\bigg(\!\displaystyle\sum_{i=1}^{K}\frac{p_{i}}{M}\bh_{i}\bh_{i}^{\H}\!+\!\bC_{\!\delta_{\mathrm{t}}} \!+\!\bC_{\!\delta_{\mathrm{r}}}\! +\!\sigma^2\Id_{M} \!\!\bigg)^{\!\!\!-1} \bh_{k}},\label{OptimalMMSE2}
		\end{align}
		and the optimal SINR is obtained as
		\begin{align}
			\gamma_{k}=\frac{p_{k}}{M}\bh_{k}^{\H} \bSigma\bh_{k},\label{SINROptimal}
		\end{align}
		where $\bSigma\!=\!\bigg(\displaystyle\sum_{i\ne k}^{K}\frac{p_{i}}{M}\bh_{i}\bh_{i}^{\H}+ \bC_{\!\delta_{\mathrm{t}}} + \bC_{\!\delta_{\mathrm{r}}} +\sigma^2\Id_{M}\!\! \bigg)^{\!\!-1} $.
	\end{proposition}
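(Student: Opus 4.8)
The plan is to recognize the instantaneous SINR in \eqref{SINR} as a generalized Rayleigh quotient in the combining vector $ \bv_{k} $, maximize it by the standard Cauchy--Schwarz / eigenvector argument, and then reconcile the two equivalent forms of the covariance matrix (with and without the $ k $th signal term) via the Sherman--Morrison identity.

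First I would introduce the interference-plus-noise matrix $ \bQ_{k}\defeq\sum_{i\ne k}\frac{p_{i}}{M}\bh_{i}\bh_{i}^{\H}+\bC_{\!\delta_{\mathrm{t}}}+\bC_{\!\delta_{\mathrm{r}}}+\sigma^{2}\Id_{M} $, which is Hermitian positive definite since $ \sigma^{2}\Id_{M}\succ\b0 $ while all remaining summands are positive semidefinite; hence $ \bSigma=\bQ_{k}^{-1} $ is well defined. In this notation \eqref{SINR} becomes $ \gamma_{k}=\frac{p_{k}}{M}\,\frac{\bv_{k}^{\H}\bh_{k}\bh_{k}^{\H}\bv_{k}}{\bv_{k}^{\H}\bQ_{k}\bv_{k}} $. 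Both numerator and denominator are homogeneous of degree two in $ \bv_{k} $, so $ \gamma_{k} $ is invariant under scaling of $ \bv_{k} $; in particular the unit-norm constraint in \eqref{Maximization3} can be dropped during the maximization and restored afterwards by normalization.

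Next I would substitute $ \bu_{k}=\bQ_{k}^{1/2}\bv_{k} $ to get $ \gamma_{k}=\frac{p_{k}}{M}\,\frac{|\bu_{k}^{\H}\bQ_{k}^{-1/2}\bh_{k}|^{2}}{\|\bu_{k}\|^{2}} $ and apply the Cauchy--Schwarz inequality: the ratio is maximized exactly when $ \bu_{k}\propto\bQ_{k}^{-1/2}\bh_{k} $, i.e. $ \bv_{k}\propto\bQ_{k}^{-1}\bh_{k}=\bSigma\bh_{k} $, with maximal value $ \frac{p_{k}}{M}\,\bh_{k}^{\H}\bQ_{k}^{-1}\bh_{k}=\frac{p_{k}}{M}\,\bh_{k}^{\H}\bSigma\bh_{k} $, which is precisely \eqref{SINROptimal}.

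Finally I would connect this to the receiver \eqref{OptimalMMSE2}, which uses the full covariance $ \bA\defeq\bQ_{k}+\frac{p_{k}}{M}\bh_{k}\bh_{k}^{\H} $. Applying the Sherman--Morrison formula yields $ \bA^{-1}\bh_{k}=\frac{\bQ_{k}^{-1}\bh_{k}}{1+\frac{p_{k}}{M}\bh_{k}^{\H}\bQ_{k}^{-1}\bh_{k}} $, so $ \bv_{k}^{\star} $ is a strictly positive scalar multiple of $ \bSigma\bh_{k} $ and therefore attains the same maximal SINR, the missing normalization being immaterial by scale invariance. I do not expect a genuine obstacle here; the only step requiring care is this equivalence between the two covariance representations, which the rank-one update handles cleanly.
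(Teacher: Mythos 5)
Your proposal is correct and follows essentially the same route as the paper: the paper maximizes the generalized Rayleigh quotient by citing \cite[Lem.~B.10]{Bjoernson2017} (whose proof is exactly your Cauchy--Schwarz argument after the $\bQ_{k}^{1/2}$ change of variables) and then invokes the matrix inversion lemma, which is your Sherman--Morrison step relating $\bSigma\bh_{k}$ to the full-covariance receiver in \eqref{OptimalMMSE2}. You have simply written out explicitly the two ingredients the paper cites.
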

	\begin{proof} 
		The SINR $ \gamma_{k} $ in \eqref{SINR} can be written as a generalized Rayleigh quotient that can be maximized according to \cite[Lem. B.10] {Bjoernson2017} by $ \bv_{k}^{\star}$ given by \eqref{OptimalMMSE2}, where the matrix inversion lemma has been also applied. A straightforward substitution of \eqref{OptimalMMSE2} into \eqref{SINR} results in \eqref{SINROptimal}.
	\end{proof}
	The optimization of the power allocation relies on a large system analysis of the max-min weighted SINR.\footnote{
		The application of MMSE-type receivers to IRS-assisted mMIMO systems includes prohibitively demanding computations such as the matrix inversion as $ M,N,K $ increase. In addition, the corresponding calculations should take place in every coherence interval. These reasons indicate the use of the theory of DEs concerning derivations in the asymptotic limit $M,N,K \to \infty$ while their ratios are kept fixed. Notably, the DE results are tight approximations even for a medium system size, e.g., $ 8 \times 8 $~\cite{Hoydis2013}.} In particular, taking into account  the same  assumptions  concerning the correlation matrices as in \cite[Assump. A1-A2]{Hoydis2013}, we obtain the DE SINR 
	$ \bar{\gamma}_{k} $ according to $ \gamma_{k}-\bar{\gamma}_{k}\xrightarrow[M \rightarrow \infty]{\mbox{a.s.}}0$. The notation $\xrightarrow[ M \rightarrow \infty]{\mbox{a.s.}}$ denotes almost sure convergence as $ M\! \rightarrow \!\infty $. 	Hence, the DE weighted SINR is given by $ 	\bar{\gamma}_{k}/\eta_{k}=\bar{\tau}$.
	\begin{lemma}\label{theorem:ULDEMMSE}
		The DE of the optimal SINR of an IRS-assisted mMIMO system, accounting for HIs, is given by
		\begin{align}
			\bar{\gamma}_{k}=p_{k}\frac{\delta_{k}}{1+\kappa_{\mathrm{UE}}p_{k}\delta_{k}},\label{SINRDE1}
		\end{align}
		where $ \delta_{k}=\frac{1}{M}\tr\left(\bR_{k}\bT \right) $ with 
		$ \bT=\!\!\sum_{i\ne k}^{K}\!\frac{p_{i}}{M}\big(\!\frac{\left(1\!+\! \kappa_{\mathrm{UE}}\right)}{\left(1+\delta_{i}\right)}\bR_{i}\!+\!\sum_{i=1}^{K}\!\frac{p_{i}}{M}\kappa_{\mathrm{BS}}\Id_{M}\!\circ\!\bR_{i}\!+\!\sigma^2\Id_{M}\!\big)^{\!\!-1} $.
	\end{lemma}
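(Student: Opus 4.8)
The plan is to derive the deterministic equivalent of the optimal SINR in \eqref{SINROptimal} by invoking the standard large-random-matrix machinery from \cite{Hoydis2013}, adapted to carry the two hardware-impairment terms $\bC_{\delta_{\mathrm{t}}}$ and $\bC_{\delta_{\mathrm{r}}}$. First I would observe that the SINR can be written as $\gamma_k = \frac{p_k}{M}\bh_k^{\H}\bSigma_k\bh_k$ where $\bSigma_k=\bigl(\sum_{i\ne k}\frac{p_i}{M}\bh_i\bh_i^{\H}+\bC_{\delta_{\mathrm t}}+\bC_{\delta_{\mathrm r}}+\sigma^2\Id_M\bigr)^{-1}$. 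The transmit-distortion covariance satisfies $\bC_{\delta_{\mathrm t}}=\kappa_{\mathrm{UE}}\sum_{i=1}^K\frac{p_i}{M}\bh_i\bh_i^{\H}$, so it simply rescales each rank-one term: combining it with the interference sum gives a matrix of the form $\sum_{i\ne k}\frac{p_i(1+\kappa_{\mathrm{UE}})}{M}\bh_i\bh_i^{\H}$ plus the diagonal remainder $\kappa_{\mathrm{UE}}\frac{p_k}{M}\bh_k\bh_k^{\H}$. The latter rank-one term involves $\bh_k$ itself; I would peel it off using the matrix-inversion (Sherman–Morrison) lemma so that the bilinear form $\bh_k^{\H}\bSigma_k\bh_k$ becomes $\frac{\bh_k^{\H}\bar\bSigma_k\bh_k}{1+\kappa_{\mathrm{UE}}\frac{p_k}{M}\bh_k^{\H}\bar\bSigma_k\bh_k}$, where $\bar\bSigma_k$ no longer contains the $k$th contribution at all. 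This is the algebraic step that produces the $\frac{1}{1+\kappa_{\mathrm{UE}}p_k\delta_k}$ factor in \eqref{SINRDE1}.

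Next I would apply the trace lemma (quadratic-form concentration, e.g.\ \cite[Lem.~B.26]{Bjoernson2017} or the analogous result in \cite{Hoydis2013}): since $\bh_k\sim\cC\cN(\b0,\bR_k)$ is independent of $\bar\bSigma_k$, we have $\frac{1}{M}\bh_k^{\H}\bar\bSigma_k\bh_k-\frac{1}{M}\tr(\bR_k\bar\bSigma_k)\xrightarrow[M\to\infty]{\mathrm{a.s.}}0$, provided the spectral norm of $\bar\bSigma_k$ is almost surely bounded — which follows from the $\sigma^2\Id_M$ regularization. It then remains to find the deterministic equivalent of $\frac{1}{M}\tr(\bR_k\bar\bSigma_k)$. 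Here I would appeal directly to \cite[Thm.~1]{Hoydis2013} for the resolvent of a sum of independent rank-one (correlated) terms plus a deterministic positive matrix: the role of the deterministic part is played by $\bS\defeq\sum_{i=1}^K\frac{p_i}{M}\kappa_{\mathrm{BS}}\,\Id_M\!\circ\!\bR_i+\sigma^2\Id_M$ (note $\Id_M\!\circ\!\bh_i\bh_i^{\H}$ concentrates on $\Id_M\!\circ\!\bR_i$ by the same trace-type argument, and being diagonal it is trivially well-conditioned), and the random rank-one sum has effective correlation matrices $\frac{p_i(1+\kappa_{\mathrm{UE}})}{M}\bR_i$. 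The theorem yields $\frac{1}{M}\tr(\bR_k\bT)=\delta_k$ with $\bT=\bigl(\sum_{i\ne k}\frac{p_i(1+\kappa_{\mathrm{UE}})}{M(1+\delta_i)}\bR_i+\bS\bigr)^{-1}$ and the fixed-point system $\delta_i=\frac{1}{M}\tr(\bR_i\bT)$, matching the $\bT$ displayed in the lemma. Substituting $\delta_k$ for the bilinear form in the Sherman–Morrison expression and multiplying through by $p_k$ gives \eqref{SINRDE1}.

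The main obstacle is making the two impairment terms fit the template of \cite{Hoydis2013} cleanly. The diagonal Hadamard term $\Id_M\!\circ\!\bh_i\bh_i^{\H}$ is itself random (not deterministic), so strictly it must first be replaced by $\Id_M\!\circ\!\bR_i$ via a concentration argument before it can be absorbed into the deterministic part $\bS$; one has to check that this replacement is uniform and does not disturb the resolvent estimate — this is routine but needs the boundedness of $\|\bar\bSigma_k\|$ and of the $p_i$. A secondary subtlety is that $\bC_{\delta_{\mathrm t}}$ couples the $k$th term's weight into the interference (the $1+\kappa_{\mathrm{UE}}$ factor multiplies $\bR_k$ as well, but that term is removed by Sherman–Morrison), so care is needed to keep track of exactly which sum runs over $i\ne k$ and which over all $i$; the final $\bT$ has the $(1+\kappa_{\mathrm{UE}})$-weighted $\bR_i$ only in the $i\ne k$ sum, while the $\kappa_{\mathrm{BS}}$ diagonal term runs over all $i$, precisely because the latter came from the deterministic part. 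I would also remark that one should verify the assumptions \cite[Assump.~A1--A2]{Hoydis2013} hold for $\bR_k=\beta_{\mathrm d,k}\bR_{\mathrm{BS},k}+\beta_{2,k}\bH_1\bPhi\tilde\bR_{\mathrm{IRS},k}\bPhi^{\H}\bH_1^{\H}$ — uniformly bounded spectral norm and bounded normalized trace — which is where the normalizations $\tr(\bR_{\mathrm{BS},k})=M$, $\tr(\bR_{\mathrm{IRS},k})=N$ and the fixed ratios $M/N$, $K/M$ enter.
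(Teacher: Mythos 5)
Your proposal follows essentially the same route as the paper's proof: peel off the $\kappa_{\mathrm{UE}}\frac{p_k}{M}\bh_k\bh_k^{\H}$ rank-one term via the matrix inversion (Sherman--Morrison) lemma to produce the $1+\kappa_{\mathrm{UE}}p_k\delta_k$ denominator, apply the trace lemma to replace the quadratic form by $\frac{1}{M}\tr(\bR_k\bSigma_k)$, replace the random diagonal term $\Id_M\circ\bh_i\bh_i^{\H}$ by $\Id_M\circ\bR_i$ through elementwise concentration (the paper does this in a footnote), and invoke \cite[Th.~1]{Hoydis2013} to obtain $\bT$ and the fixed point $\delta_i=\frac{1}{M}\tr(\bR_i\bT)$. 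Your write-up is correct and in fact spells out the assumption-checking (bounded spectral norms, handling of the Hadamard term) more explicitly than the paper does.
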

	\proof
	We define the matrix 
	$ \bSigma_{k}\!=\!\big(\!\sum_{i\ne k}^{K}\!\frac{p_{i}}{M}\left(1+\kappa_{\mathrm{UE}}\right)\bh_{i}\bh_{i}^{\H}\!+\! \bC_{\!\delta_{\mathrm{r}}}\!+\! \sigma^2 \Id_{M}\! \big)^{\!-\!1} $.
	From \eqref{SINROptimal}, the optimal SINR becomes
	\begin{align}
		\gamma_{k}&\xrightarrow[ M \rightarrow \infty]{\mbox{a.s.}} p_{k}\frac{\frac{1}{M}\tr\left(\bSigma_{k} \bR_{k} \right)}{1+\frac{1}{M}\kappa_{\mathrm{UE}}p_{k}\tr\left(\bSigma_{k}\bR_{k} \right)}\label{SINR2}\\
		&\xrightarrow[ M \rightarrow \infty]{\mbox{a.s.}}p_{k}\frac{\frac{1}{M}\tr\left(\bR_{k} \bT \right)}{1+\frac{1}{M}\kappa_{\mathrm{UE}}p_{k}\tr\left(\bR_{k}\bT \right)},
	\end{align}
	where in \eqref{SINR2}, we have applied the matrix inversion lemma \cite[Lem. 1]{Hoydis2013}, and have used \cite[Lem. 4]{Hoydis2013}.
	\footnote{We assume that the diagonal matrix inside $ \bSigma_{k} $ is considered deterministic with diagonal elements given by the limits of the individual diagonal elements~\cite{Papazafeiropoulos2017a}. Specifically, we exploit the uniform convergence $ \lim \sup_M \max_{1\le i\le M} { \left|\left[\bh_{i}\bh_{i}^{\H} \right]_{mm} - \left[\bR_{i}\right]_{mm}\right| } = 0$, and obtain $\left\|\frac{1}{M} \mathrm{diag}(\bh_{i} \bh_{i}^{H})-\frac{1}{M}\tr\left(\diag\left(\bR_{i}\right)\right)\!\right\| \xrightarrow[ M \rightarrow \infty]{\mbox{a.s.}}\! 0 $ where $ [A]_{mm} $ denotes the $ m $th diagonal element of matrix $ \bA $.} The last step makes use of \cite[Th. 1]{Hoydis2013} with $ \bT $ given in Lemma~\ref{theorem:ULDEMMSE}. \endproof
	Using the combiner in \eqref{OptimalMMSE2}, the SINR constraint in \eqref{SINRconstraint} is fulfilled with equality by choosing the optimal power allocation according to the following proposition. In other words, the following proposition provides a necessary 	and sufficient condition for optimality of $ \mathcal{P}1 $.
	\begin{proposition}\label{PropositionP}
		Given the  RBM $ \bPhi $ and the decoder matrix $ \bV $, the optimal power vector $ \bp^{\star} $, accounting for HIs, is obtained geometrically fast as the positive solution to the fixed-point equation given by
		\begin{align}
			\!	\frac{1}{\bar{\tau}^{\star}}\bp^{\!\star}&\!=\left(\etav \!\circ\! \bu\!\circ \!\left(\kappa_{\mathrm{UE}}\deltav \!\circ\!\bONE+\bONE\betav^{\T}\right)\right)\!\bp^{\!\star},\label{Prop1Equation}
		\end{align}
		where $ \bar{\tau}^{\star} $ is the deterministic optimal weighted SINR, $\deltav=\left[\delta_{1}, \ldots,\delta_{K}\right]^{\T} $, and $ \bu=\left[\frac{1}{G_{11}}, \ldots,\frac{1}{G_{KK}}\right]^{\T} $.
	\end{proposition}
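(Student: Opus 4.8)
The plan is to turn the per-user SINR-balancing conditions at the optimum of $(\mathcal{P}1)$ into a single homogeneous system in $\bp^{\star}$, and then to read off geometric convergence from the monotonicity and scalability of the associated mapping. To begin, I would exploit the structure of the max-min optimum: as in the balancing argument of \cite{Cai2011}, for a sum-power-constrained problem the minimum over $k$ of the weighted SINRs is maximized only when all of them coincide, so \eqref{SINRconstraint} holds with equality for every $k$ and the budget in \eqref{Maximization3} is active, $\frac1M\betav^{\T}\bp^{\star}=p_{\mathrm{max}}$; this step is unaffected by the HIs because, for a fixed combiner, $\bar\gamma_k$ in \eqref{SINRDE1} is increasing in $p_k$ and non-increasing in each $p_j$, $j\neq k$ (a standard monotonicity property of the deterministic-equivalent quantities), so lowering any $p_k$ strictly lowers $\bar\gamma_k$ without decreasing the remaining weighted SINRs. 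Since the combiner is fixed to the HIs-aware LMMSE receiver \eqref{OptimalMMSE2}, each instantaneous SINR coincides asymptotically with the deterministic equivalent $\bar\gamma_k$ of Lemma~\ref{theorem:ULDEMMSE}, and the balancing conditions read $\bar\gamma_k=\eta_k\bar{\tau}^{\star}$ for all $k$.

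Next I would linearize these conditions. Starting from \eqref{SINRDE1}, the identity $\bar\gamma_k=p_k\delta_k/(1+\kappa_{\mathrm{UE}}p_k\delta_k)=\eta_k\bar{\tau}^{\star}$ is equivalent to $1/(p_k\delta_k)=1/(\eta_k\bar{\tau}^{\star})-\kappa_{\mathrm{UE}}$, that is, $p_k/(\eta_k\bar{\tau}^{\star})=1/\delta_k+\kappa_{\mathrm{UE}}p_k$. I would then expand $\delta_k=\frac1M\tr(\bR_k\bT)$ using the defining deterministic-equivalent fixed point of $\bT$ in Lemma~\ref{theorem:ULDEMMSE} together with the trace identities of \cite{Hoydis2013}; this makes explicit how $\delta_k$ depends on the whole vector $\bp^{\star}$ through the interference and the additive-HI contributions $\bC_{\!\delta_{\mathrm{t}}},\bC_{\!\delta_{\mathrm{r}}}$, and organizes that dependence through the diagonal quantities $G_{kk}$, so that $1/\delta_k+\kappa_{\mathrm{UE}}p_k$ becomes a single linear form in $\bp^{\star}$ with coefficients assembled from $\eta_k$, $\delta_k$, $\beta_k$ and $1/G_{kk}$. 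Collecting the resulting $K$ scalar identities into a vector equation and multiplying through by $\bar{\tau}^{\star}$ yields precisely the homogeneous form \eqref{Prop1Equation}, while $\bar{\tau}^{\star}$ and the overall scale of $\bp^{\star}$ are pinned down by imposing the active budget $\frac1M\betav^{\T}\bp^{\star}=p_{\mathrm{max}}$, which simultaneously forces $p_k^{\star}>0$.

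Finally I would establish convergence and the necessity/sufficiency claim. With $\deltav$ and $\bu$ momentarily frozen, the right-hand side of \eqref{Prop1Equation} is the action of an entrywise-nonnegative matrix (entrywise-positive, hence primitive, when $\kappa_{\mathrm{UE}}>0$), and, incorporating the additional dependence of $\deltav,\bu$ on $\bp^{\star}$, the composite update defines a standard, i.e. positive, monotone and scalable, interference mapping; therefore, exactly as in the fixed-point iteration of \cite{Cai2011}, the synchronous iteration that alternates a recomputation of $\deltav,\bu$ with the linear update of \eqref{Prop1Equation} and a renormalization enforcing the power budget converges geometrically from any strictly positive initialization to the unique positive solution of \eqref{Prop1Equation}, which is the global optimizer of the power-control subproblem for the given $\bPhi$. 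Conversely, any optimizer must be SINR-balanced and budget-active, hence must solve \eqref{Prop1Equation}, so the condition is both necessary and sufficient.

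\textbf{Main obstacle.} The delicate part is the second step. The scalar $\delta_k$ is not available in closed form but only as the solution of a coupled system of $K$ deterministic-equivalent equations that all depend on $\bp^{\star}$ and are further entangled through $\bC_{\!\delta_{\mathrm{t}}}$ and $\bC_{\!\delta_{\mathrm{r}}}$; turning the balancing relations into the clean bilinear form \eqref{Prop1Equation} in terms of $\deltav$ and $\bu=[1/G_{11},\dots,1/G_{KK}]^{\T}$, and then certifying that the resulting nested iteration---inner solve for $\deltav,\bu$, outer update of $\bp^{\star}$---still enjoys the contraction/Perron structure required for geometric convergence and a unique positive fixed point, is where the substantive work lies.
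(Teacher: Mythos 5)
Your proposal is correct and follows essentially the same route as the paper: the key step in both is the algebraic rewriting of the balanced deterministic-equivalent SINR, $p_k/(\eta_k\bar{\tau}^{\star})=1/\delta_k+\kappa_{\mathrm{UE}}p_k$, followed by homogenizing via the active power budget $\betav^{\T}\bp^{\star}$ to obtain the nonnegative eigen-equation \eqref{Prop1Equation} (with $G_{kk}=\delta_k$, so $\bu=[1/\delta_1,\ldots,1/\delta_K]^{\T}$). The only substantive difference is that you justify geometric convergence through the standard (positive, monotone, scalable) interference-mapping framework, whereas the paper invokes the concave Perron--Frobenius result of Krause, which is the cleaner way to get the geometric \emph{rate} (standard interference functions alone guarantee convergence but not a geometric rate); the implicit dependence of $\deltav,\bu$ on $\bp^{\star}$ that you flag as the main obstacle is likewise absorbed into that fixed-point machinery in the paper without further comment.
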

	\proof 
	We define $\deltav=\left[\delta_{1}, \ldots,\delta_{K}\right]^{\T} $ and $ \bu=\left[\frac{1}{\delta_{1}}, \ldots,\frac{1}{\delta_{K}}\right]^{\T} $. Now, the weighted SINR can be written as
	\begin{align}
		\frac{\bar{\tau}_{k}}{\eta_{k}}\!=\!\frac{p_{k}}{\left(\etav \!\circ\! \bu\!\circ \!\left(\kappa_{\mathrm{UE}}\deltav \!\circ\!\bp+\!\bONE\right)\right)_{k}},\nn
	\end{align}
	where $ \bar{\tau}_{k} $ is the deterministic weighted SINR. Finally, taking advantage of \eqref{Maximization3}   and that, at optimality, the weighted
	SINR for different UEs is the same, we result in \eqref{Prop1Equation}, which converges geometrically fast as follows from 	the remark after Theorem 1 in \cite{Krause2001}.
	\endproof

	Given the decoder matrix $ \bV $ and the optimal power vector $ \bp^{\star} $, the design of the IRS RBM $ \bPhi $ is obtained by means of the optimization problem
	\begin{align}\begin{split}
			(\mathcal{P}2)~~~~~~~\max_{\bPhi} ~~~	&\bar{\tau}^{*}\\
			\mathrm{s.t}~~~&|\phi_{n}|=1,~~ n=1,\dots,N,
		\end{split}\label{Maximization} 
	\end{align}
	which is a maximization problem with a unit-modulus constraint regarding $ \phi_{n} $ that can be solved by using projected gradient
	ascent until converging to a
	stationary point as in \cite{Kammoun2020}. 
	
	Let $ \bs^{i} =[\phi_{1}^{i}, \ldots, \phi_{N}^{i}]^{\T}$ be the
	the induced phases at step $ i $ and $ \bq^{i} $ be the adopted ascent direction at step $ i $ with $ [\bq^{i}]_{n}= \pdv{\bar{\tau}^{*}}{\phi_{n}^{*}} $(given by Lemma \eqref{lemmaDerivative}),
	the next iteration point is given by
	\begin{align}
		\tilde{\bs}^{i+1}&=\bs^{i}+\mu \bq^{i}\\
		\bs^{i+1}&=\exp\left(j \arg \left(\tilde{\bs}^{i+1}\right)\right),
	\end{align}
	where $ \mu $ is the step size, computed at each iteration by means of the backtracking line search \cite{Boyd2004}.
	The projection problem $ \min_{|s_{n}|=1, n=1,\ldots,N}\|\bs-\tilde{\bs}\|^{2} $ provides the solution while satisfying the unit-modulus constraint. 
	
	\begin{lemma}\label{lemmaDerivative}
		The derivative of $ 	\bar{\tau}^{\star} $ with respect to $ \phi_{n}^{*} $ is given by the fixed-point equation \eqref{deriv4} at the top of the next page.
	\end{lemma}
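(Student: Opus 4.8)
\emph{Proof plan.} The quantity $\bar{\tau}^{\star}$ has no closed form; it is pinned down implicitly by three coupled blocks of relations: (i) the equalisation of the weighted DE SINRs at the optimum, $\bar{\gamma}_{k}/\eta_{k}=\bar{\tau}^{\star}$ for all $k$, with $\bar{\gamma}_{k}$ as in \eqref{SINRDE1}; (ii) the fixed-point system $\delta_{k}=\frac{1}{M}\tr\left(\bR_{k}\bT\right)$ of Lemma~\ref{theorem:ULDEMMSE}; and (iii) the active power constraint $\frac{1}{M}\betav^{\T}\bp=p_{\mathrm{max}}$. The RBM enters only through $\bR_{k}=\beta_{\mathrm{d},k}\bR_{\mathrm{BS},k}+\beta_{2,k}\bH_{1}\bPhi\tilde{\bR}_{\mathrm{IRS},k}\bPhi^{\H}\bH_{1}^{\H}$, hence each of $\{\bR_{k}\}$, $\{\delta_{k}\}$, $\bp^{\star}$ and $\bar{\tau}^{\star}$ is a function of $\bPhi$. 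The plan is to differentiate each of the three blocks with respect to $\phi_{n}^{*}$ using Wirtinger calculus (treating $\phi_{n}$ and $\phi_{n}^{*}$ as independent), to assemble the resulting linear relations among the unknowns $\partial\delta_{k}/\partial\phi_{n}^{*}$, $\partial p_{k}/\partial\phi_{n}^{*}$ and $\partial\bar{\tau}^{\star}/\partial\phi_{n}^{*}$, and then to eliminate the first two families so as to isolate the last; the self-consistent way in which the $\delta$-derivatives re-enter is exactly what makes the final identity \eqref{deriv4} a fixed-point equation.

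\emph{Ingredients.} Only the reflected term of $\bR_{k}$ depends on $\phi_{n}$, and $\partial\bPhi^{\H}/\partial\phi_{n}^{*}$ is a single nonzero diagonal entry while $\partial\bPhi/\partial\phi_{n}^{*}$ vanishes, so $\partial\bR_{k}/\partial\phi_{n}^{*}$ is a rank-one matrix proportional to $\bH_{1}\bPhi\tilde{\bR}_{\mathrm{IRS},k}\bee_{n}\bh_{1,n}^{\H}$, where $\bh_{1,n}$ is the $n$-th column of $\bH_{1}$. Writing $\bT=\bJ^{-1}$ with $\bJ$ the deterministic matrix of Lemma~\ref{theorem:ULDEMMSE} and using $\partial\bT/\partial\phi_{n}^{*}=-\bT\left(\partial\bJ/\partial\phi_{n}^{*}\right)\bT$, differentiation of $\delta_{k}=\frac{1}{M}\tr\left(\bR_{k}\bT\right)$ gives
\begin{align}
\frac{\partial\delta_{k}}{\partial\phi_{n}^{*}}=\frac{1}{M}\tr\left(\frac{\partial\bR_{k}}{\partial\phi_{n}^{*}}\bT\right)-\frac{1}{M}\tr\left(\bR_{k}\bT\,\frac{\partial\bJ}{\partial\phi_{n}^{*}}\,\bT\right).\nn
\end{align}
Since $\bJ$ depends on every $p_{i}$, on every $\delta_{i}$ through the factors $(1+\delta_{i})^{-1}$ (whose derivative is $-(1+\delta_{i})^{-2}\,\partial\delta_{i}/\partial\phi_{n}^{*}$), and on every $\bR_{i}$, collecting coefficients turns this into a \emph{linear} fixed-point system for the vector $\big(\partial\delta_{k}/\partial\phi_{n}^{*}\big)_{k}$, with entries of the form $\frac{1}{M}\tr\left(\bR_{k}\bT\bR_{i}\bT\right)$ (plus Hadamard-type analogues from the $\Id_{M}\circ\bR_{i}$ blocks) and a right-hand side that is linear in $\big(\partial p_{i}/\partial\phi_{n}^{*}\big)_{i}$ and in the explicit $\partial\bR_{i}/\partial\phi_{n}^{*}$ terms.

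\emph{Assembly.} Differentiating $\bar{\gamma}_{k}=p_{k}\delta_{k}/(1+\kappa_{\mathrm{UE}}p_{k}\delta_{k})$ together with $\bar{\gamma}_{k}/\eta_{k}=\bar{\tau}^{\star}$ yields, for each $k$,
\begin{align}
\frac{1}{\eta_{k}\left(1+\kappa_{\mathrm{UE}}p_{k}\delta_{k}\right)^{2}}\left(\delta_{k}\frac{\partial p_{k}}{\partial\phi_{n}^{*}}+p_{k}\frac{\partial\delta_{k}}{\partial\phi_{n}^{*}}\right)=\frac{\partial\bar{\tau}^{\star}}{\partial\phi_{n}^{*}},\nn
\end{align}
which expresses each $\partial p_{k}/\partial\phi_{n}^{*}$ through $\partial\delta_{k}/\partial\phi_{n}^{*}$ and the single scalar $\partial\bar{\tau}^{\star}/\partial\phi_{n}^{*}$; substituting these into the system of the previous paragraph closes a linear system for $\big(\partial\delta_{k}/\partial\phi_{n}^{*}\big)_{k}$ parametrised by $\partial\bar{\tau}^{\star}/\partial\phi_{n}^{*}$. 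Differentiating the active constraint gives $\betav^{\T}\big(\partial p_{k}/\partial\phi_{n}^{*}\big)_{k}=0$; inserting the previous expressions leaves one scalar equation for $\partial\bar{\tau}^{\star}/\partial\phi_{n}^{*}$, which is \eqref{deriv4}, its fixed-point form being inherited from the embedded $\delta$-derivative system. If instead $\bp^{\star}$ is held frozen during the RBM update, the $\partial p_{k}/\partial\phi_{n}^{*}$ terms drop and the same scheme still applies with $\partial\bar{\tau}^{\star}/\partial\phi_{n}^{*}$ read off from any single equalised SINR.

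\emph{Main obstacle.} The crux is the coupling: $\big(\partial\delta_{k}/\partial\phi_{n}^{*}\big)_{k}$ and $\big(\partial p_{k}/\partial\phi_{n}^{*}\big)_{k}$ are available only as solutions of the assembled linear systems, so one must (i) argue these systems are nonsingular --- which follows from the same contraction / spectral-radius estimates that already deliver existence and uniqueness of the DE fixed point (Lemma~\ref{theorem:ULDEMMSE}) and of $\bp^{\star}$ (Proposition~\ref{PropositionP} and \cite{Krause2001}), thereby guaranteeing that $\partial\bar{\tau}^{\star}/\partial\phi_{n}^{*}$ is well defined --- and (ii) keep the $O(K^{2})$ trace terms $\frac{1}{M}\tr\left(\bR_{k}\bT\bR_{i}\bT\right)$ and their diagonal/Hadamard variants organised, recalling that \emph{every} $\bR_{i}$, not just $\bR_{k}$, varies with $\phi_{n}$. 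A minor point is that the leave-one-out sum $\sum_{i\neq k}$ inside $\bJ$ may be replaced by $\sum_{i}$ at the cost of an $O(1/M)$ term, immaterial for the deterministic-equivalent identity.
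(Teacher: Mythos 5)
Your proposal is correct in its essentials and contains the paper's entire argument as a special case, but your main line of attack is genuinely more ambitious than what the paper actually does. The paper, consistent with its alternating-optimization scheme $(\mathcal{P}2)$, holds $\bV$ and $\bp^{\star}$ \emph{frozen} during the RBM update: it differentiates only the chain $\bar{\tau}^{\star}\to\delta_{k}\to(\bR_{i},\bT)$, i.e. it writes $\pdv{\bar{\tau}^{\star}}{\phi_{n}^{*}}\propto\pdv{\delta_{k}}{\phi_{n}^{*}}$, expands $\pdv{\delta_{k}}{\phi^{*}_{n}}=\frac{1}{M}\tr\big(\pdv{\bR_{k}}{\phi^{*}_{n}}\bT\big)-\frac{1}{M}\tr\big(\bR_{k}\bT\pdv{\bT^{-1}}{\phi^{*}_{n}}\bT\big)$, and obtains the fixed point in \eqref{deriv4} from the $(1+\delta_{i})^{-1}$ factors inside $\bT^{-1}$ --- exactly your ``Ingredients'' paragraph. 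Your additional blocks (implicit differentiation of the equalized weighted SINRs, of $\bp^{\star}(\bPhi)$, and of the active power constraint $\betav^{\T}\bp=Mp_{\mathrm{max}}$) compute the \emph{total} derivative of the value function, which is the more principled object for a projected-gradient step on $\bar{\tau}^{\star}$; however, \eqref{deriv4} contains no $\pdv{p_{k}}{\phi_{n}^{*}}$ terms, so it is the frozen-power derivative, and your own fallback remark (``if $\bp^{\star}$ is held frozen\ldots'') is the route the paper takes. What your version buys is (i) an honest treatment of the $\bPhi$-dependence of $\bp^{\star}$, which the paper silently discards by alternation, and (ii) an explicit argument for nonsingularity of the linear system defining $\big(\pdv{\delta_{k}}{\phi_{n}^{*}}\big)_{k}$ via the contraction estimates behind the DE fixed point, a point the paper leaves implicit by simply presenting \eqref{deriv4} as a fixed-point equation. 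Two small reconciliations: the rank-one structure $\pdv{\bR_{k}}{\phi_{n}^{*}}\propto\bH_{1}\bPhi\tilde{\bR}_{\mathrm{IRS},k}\bee_{n}\bh_{1,n}^{\H}$ you identify is what the paper imports as \cite[Lem.~1]{Papazafeiropoulos2021} and is what produces the $[\cdot]_{n,n}$ entries in \eqref{deriv4}; and your denominator $(1+\kappa_{\mathrm{UE}}p_{k}\delta_{k})^{2}$ is the correct quotient-rule output, whereas the paper's \eqref{deriv3} and \eqref{deriv4} display $(1+\kappa_{\mathrm{UE}}\delta_{k})^{2}$, which appears to be a typographical slip rather than a substantive difference.
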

	\proof Refer to Appendix~\ref{Lemma1}.\endproof

After establishing the optimal receiver, power allocation, and RBM, we  combine them by using alternate optimization to find a locally optimal solution. Note that the non-convexity of $ \left(\mathcal{P}1\right) $ cannot guarantee any global optimality. 
		
	Since each subproblem achieves an optimal solution, the objective function  of $ \left(\mathcal{P}1\right) $ is non-decreasing over iterations. Moreover, the optimal value of the objective function is bounded from above due to the power constraint. Hence, the proposed algorithm converges.
		
	The proposed algorithm is quite advantageous compared to algorithms based on instantaneous CSI  since the power allocation and the phase shifts converge to deterministic values that depend on statistical CSI. Hence, they can be a priori calculated and stored while can be updated at  every several coherence intervals due to variation of these channel statistics. On the contrary, instantaneous CSI algorithms would require frequent optimization taking place at each coherence interval.
		\begin{figure*}
		\begin{align}
			\pdv{\bar{\tau}^{\star}}{\phi_{n}^{*}} \! &=\!\frac{p_{k}}{\eta_{k}M\left(1+\kappa_{\mathrm{UE}}\delta_{k}\right)^{2}}\al\bigg[\bH_{1}^{\H}\Big(\beta_{2,k}\bT\bH_{1} \bPhi\tilde{\bR}_{\mathrm{IRS},k}-\kappa_{\mathrm{BS}}\sum_{i=1 }^{K}\!\frac{p_{i}\beta_{2,i}}{M}(\Id_{M}\circ \bT\bR_{k}\bT)\bH_{1} \bPhi\tilde{\bR}_{\mathrm{IRS},i}\!\nn\\
			&-		
			\left(1\!+\! \kappa_{\mathrm{UE}}\right)\sum_{i\ne k}^{K}\!\frac{p_{i}\beta_{2,i}}{M\left(1+\delta_{i}\right)} \bT\bR_{k}\bT\bH_{1} \bPhi\tilde{\bR}_{\mathrm{IRS},i}					\Big)\bigg]_{n,n}-
			\left(1\!+\! \kappa_{\mathrm{UE}}\right)\sum_{i\ne k}^{K}\!\frac{p_{i}\beta_{2,i}\tr(\bT \bR_{k}\bT \bR_{i})}{M\left(1+\delta_{i}\right)^{2}}\pdv{\delta_{i}}{\phi^{*}_{n}}.\label{deriv4}
		\end{align}
		\hrulefill
	\end{figure*}


	\section{Numerical Results}\label{Numerical} 
	We consider a uniform linear array (ULA) and a uniform planar array (UPA) for the configuration of the BS and IRS, respectively. In particular, we have $ d_{\mathrm{BS}} = d_{\mathrm{IRS}}=0.5\lambda $ while $ \theta_{1,n} $, $ \psi_{1,n} $ are uniformly distributed between $ 0 $ to $ \pi $ and $ 0 $ to $ 2\pi $, respectively. Also, $ \theta_{2,n}= \pi- \theta_{1,n} $, $ \psi_{2,n}=\pi+ \psi_{1,n}$. Moreover, we employ the 3GPP Urban Micro (UMi) scenario from TR36.814 for a carrier frequency of $ 2.5 $ GHz and noise level $ -80 $ dBm, where the path losses for $ \bh_{2,k} $ and $ \bH_{1} $ are generated based on the NLOS and LOS versions, respectively \cite{Kammoun2020}. Specifically, therein, the overall path loss for the IRS-assisted
	link is $ 	\beta_{k}=	\beta_{1,k}	\beta_{2,k} $, where
	\begin{align}
		\beta_{1,k}=C_{1} d_{\mathrm{BS}-\mathrm{IRS}}^{-\nu_{1}},~	~\beta_{2,k}=C_{2} d_{\mathrm{IRS}-\mathrm{UE}_{k}}^{-nu_{2}}
	\end{align}
	with $ C_{1}=26 $ dB, $ C_{2}=28 $ dB, $ \nu_{1} =2.2$, $ \nu_{2} =3.67$. The variables $ d_{\mathrm{BS}-\mathrm{IRS}} $ and $ d_{\mathrm{IRS}-\mathrm{UE}_{k}} $ express the distances between the BS and IRS, and the IRS and UE $ k $, respectively.
	The penetration losses of the IRS-assisted links are assumed negligible by deploying the IRS higher than the BS. For $ \beta_{\mathrm{d},k} $, we assume the same parameters as for $ \beta_{2,k} $, but we also consider an additional penetration loss equal to $ 15~\mathrm{dB} $. We use $ 5 $ dBi antennas at the BS and IRS, and $\bR_{\mathrm{BS},k}, \bR_{\mathrm{IRS},k} $ are generated as in \cite{Bjoernson2020,Kammoun2020}, respectively. The size of each IRS-element dimension is $ \lambda/4 $. The ``solid'' and ``dashed'' lines correspond to $ p_{\mathrm{max}}=0~\mathrm{dB} $ and $ p_{\mathrm{max}}=20~\mathrm{dB} $, respectively, while different line symbols correspond to different values of impairments, given by $ \kappa_{\mathrm{BS}}=\kappa_{\mathrm{UE}}=\{0, 0.05^2, 0.1^{2}\} $, respectively. The optimization takes place by choosing arbitrary values for the phase shifts and the power as in \cite{Cai2011}.  For simplicity, we assume $ \al=1 $ and that all data streams have the same priority $ (\etav=\bONE) $ and power weight
	associated ($ \betav=\frac{1}{K}\bONE $).
	
	Fig. 1.(a) depicts the minimum uplink user rate $ \log_{2}\left(1+\bar{\tau}\right) $ with respect to the number of IRS elements for different IRS-HIs and correlation conditions with no AT-HIs. The line describing the ideal case (no HIs) appears for comparison. In the case of uniform phase noise $ m=0 $, the rate takes the lowest value since $ \bR_k $ does not depend on the phase shifts. However, when the Von Mises PDF is assumed (``solid'' lines), the IRS can be optimized and the rate increases. In particular, based on the concentration parameter $ \kappa_{\tilde{\theta}} $, we observe that its decrease results in the decrease of the rate. In the special case $ \kappa_{\tilde{\theta}} =0$ (``star'' symbols), the line coincides with the line describing the uniform distribution. Moreover, if no IRS correlation is assumed (``$ \times $'' symbols), the rate is lower since the rate cannot be maximized due to IRS exploitation (see Rem. 1). Also, we have added a ``dotted'' line corresponding to $ \kappa_{\tilde{\theta}}=2 $ that describes a ``naive'' scheme, where the impact of no HIs is taken into account during the optimization design. Its lower SINR reveals the robustness of the proposed design.
	
	In Fig. 1.(b), we illustrate the minimum uplink user rate with respect to the number of IRS elements for different SNR values and varying AT-HIs (no IRS-HIs). Despite that the rate generally increases with SNR, we now observe its increase with $ N $. In the case of perfect hardware, the rate appears no ceiling as $N $ increases, but it saturates in practice, where AT-HIs are met. Moreover, the saturation appears earlier in the case of higher SNR ($ 20~\mathrm{dB} $). Although, the more severe AT-HIs result in higher degradation, we depict the impact of the BS additive distortion $ \kappa_{\mathrm{BS}} $ as $ N $ increases by  the ``dot'' lines in the circle. We notice that the curves converge to the same value when $ N \to \infty $, which means that the impact of $ \kappa_{\mathrm{BS}} $ is negligible at large $ N $, i.e., the larger the IRS, the more beneficial the communication with mMIMO despite the use of low-quality transceiver hardware. Furthermore, the ``star'' lines correspond to Monte Carlo simulations verifying the DE results.
	
	Fig. 1.(c) shows the minimum user rate versus the number of BS antennas for varying AT-HIs (no IRS-HIs). Especially, we observe that the variations of $ M $ and $ N $ present similar behavior. In the case of perfect hardware, the achievable rate increases unboundedly as $M \to \infty $. However, when the AT-HIs are taken into account, we observe finite limits. Notably, the lower quality (more severe AT-HIs) results in larger degradation. Moreover, apart from that, a higher SNR leads to a higher rate, we observe that the convergence to saturation between different SNRs is different since AT-HIs are power-dependent. Thus, at $ 20~\mathrm{dB} $, the rate saturates faster, i.e., the majority of the multi-antenna gain takes place at low $ M $, but still, a large $M $ contributes to larger multiplexing and inter-user interference mitigation. Hence, an IRS-assisted mMIMO works better at high SNR values. In addition, we observe that, at $ 0~\mathrm{dB} $, the convergence requires more antennas.
	\begin{figure*}[t]
		\begin{minipage}{0.33\textwidth}
			\centering
			\includegraphics[trim=0cm -0.20cm 0cm 0.2cm, clip=true, width=2.2in]{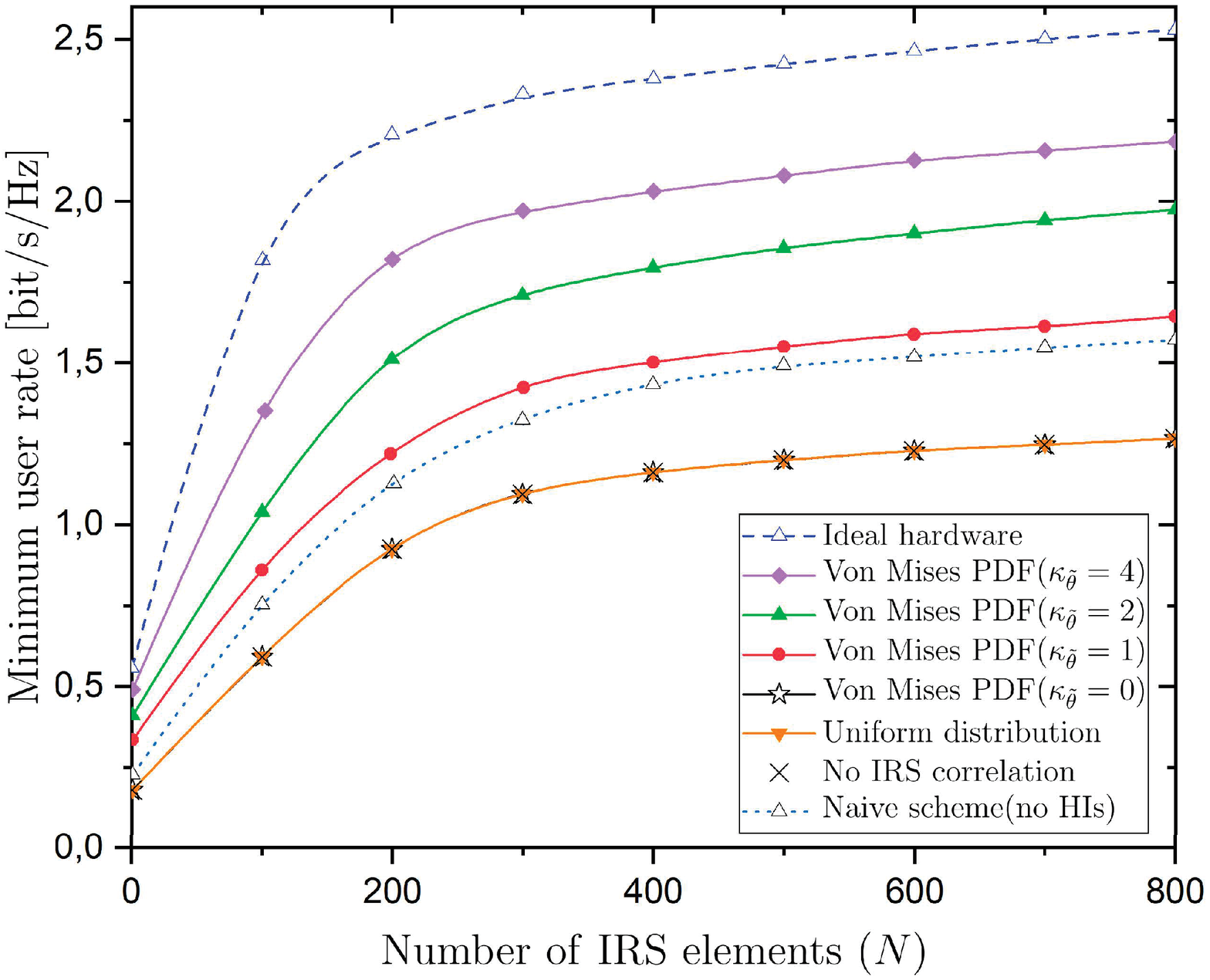} \vspace*{-0.2cm}
			\\ $(a)$
			\label{FigCorrvsUncorrIRS}
			\vspace*{-0.2cm}
		\end{minipage}
		\begin{minipage}{0.33\textwidth}
			\centering
			\includegraphics[trim=0cm -0.20cm 0cm 0.2cm, clip=true, width=2.2in]{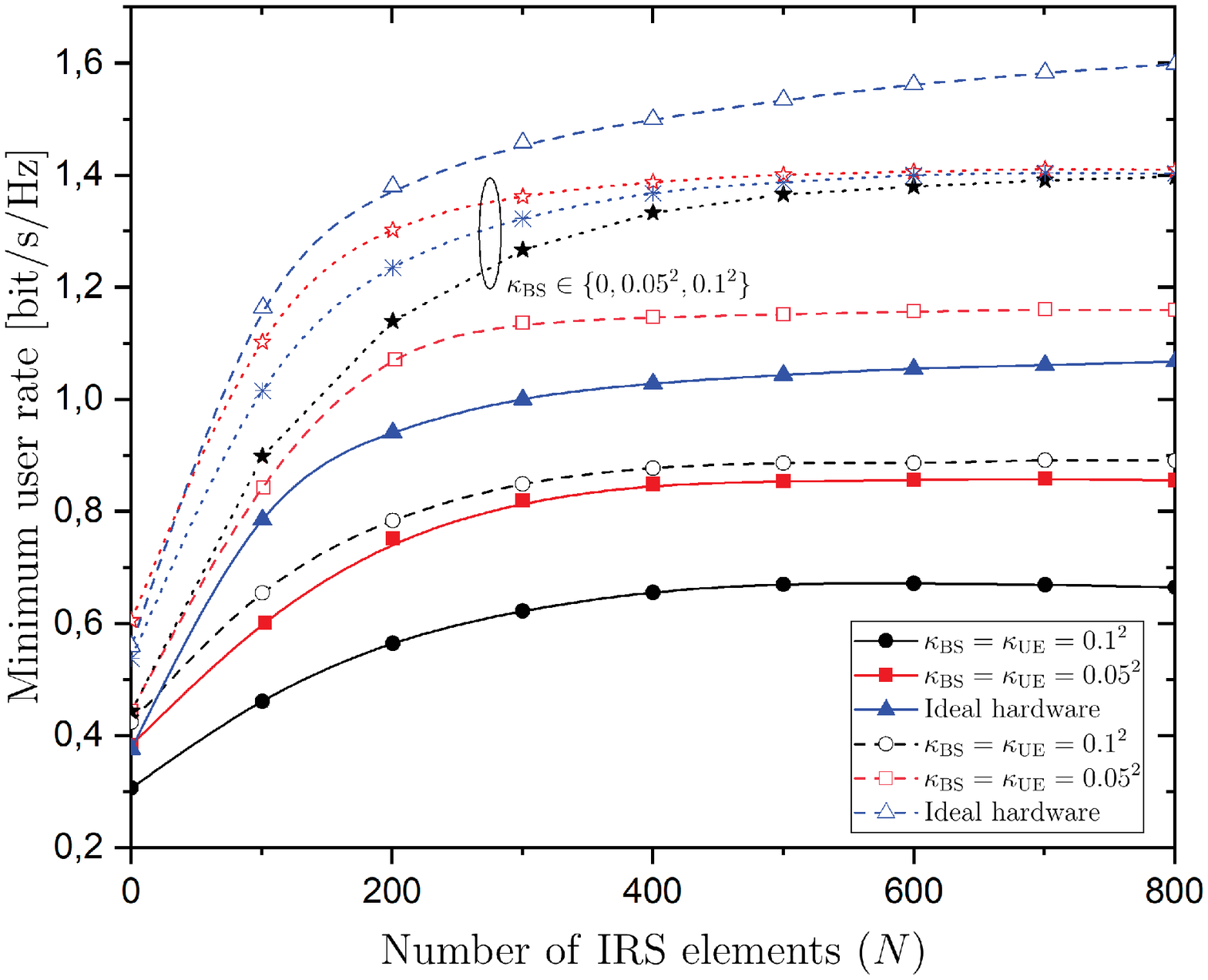} \vspace*{-0.2cm}
			\\$(b)$
			\label{FigCorrvsUncorrN}
			\vspace*{-0.2cm}
		\end{minipage}
		\begin{minipage}{0.33\textwidth}
			\centering
			\includegraphics[trim=0cm -0.20cm 0cm 0.2cm, clip=true, width=2.2in]{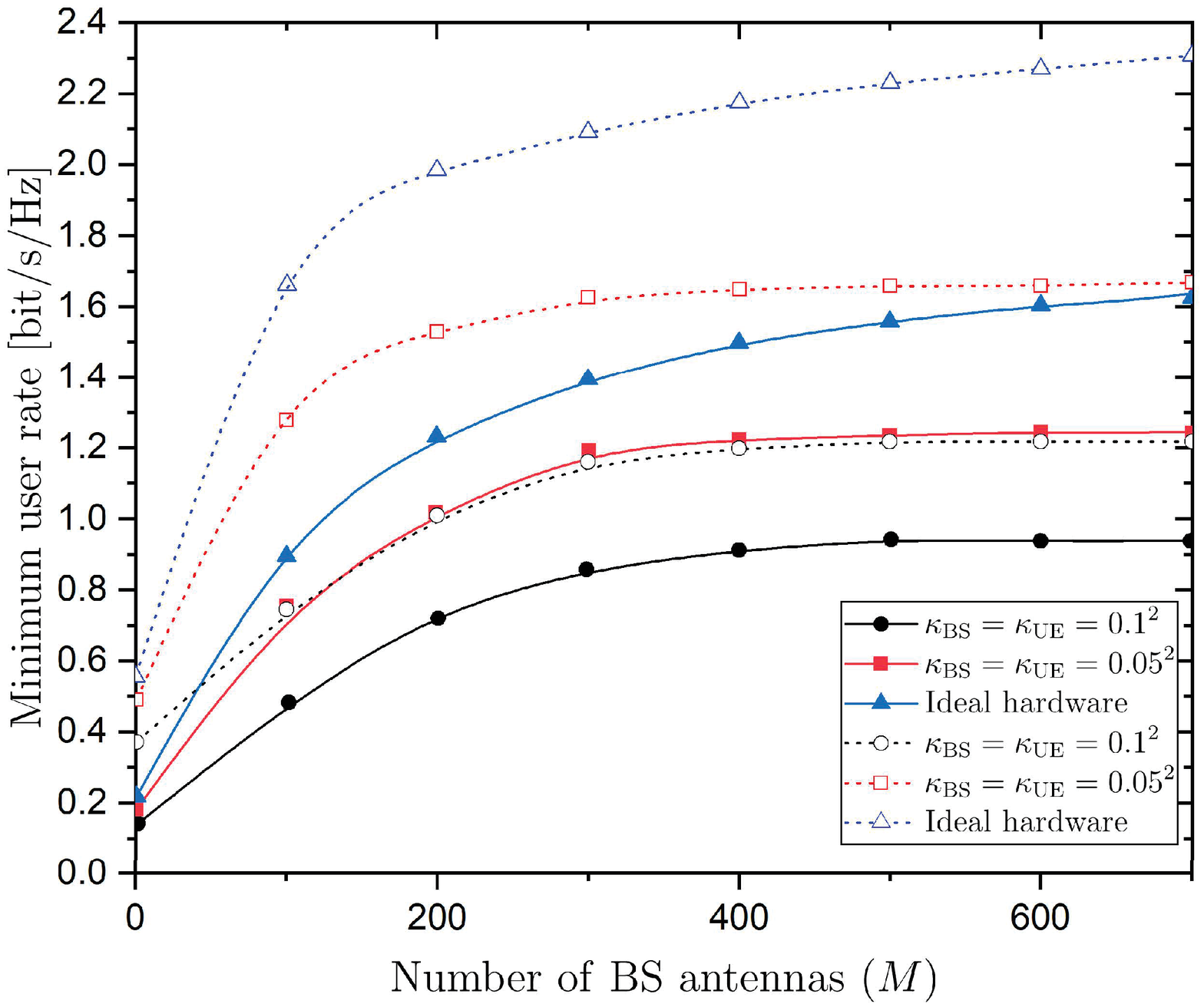}\vspace*{-0.2cm}\\$(c)$
			\label{FigCorrvsUncorrM}
			\vspace*{-0.2cm}
		\end{minipage}
		\caption{Asymptotic weighted max-min user rate of an IRS-assisted mMIMO system for varying $(a)$ IRS-HIs and correlation (correlated/uncorrelated Rayleigh fading channels) conditions versus the number of IRS elements $N$ ($ M=80 $, $ K=5 $); AT-HIs in terms of $ \kappa_{\mathrm{BS}} $, $\kappa_{\mathrm{UE}}$ and transmit power budget $ p_{\mathrm{max}} $ versus $(b)$ the number of IRS elements $N$ ($ M=80 $, $ K=5 $); and $(c)$ the number of BS antennas $M$ ($ N=80 $, $ K=5 $).}
		\vspace{-0.65cm}
	\end{figure*}

	\section{Conclusion} \label{Conclusion} 
	This paper provided a thorough investigation of the impact of AT-HIs and IRS-HIs on an IRS-assisted mMIMO system. Specifically, we obtained the optimal asymptotic max-min weighted uplink SINR by	optimizing the transmit power, the HIs-aware receiver, and the RBM. Remarkably, this asymptotic SINR, being dependent only on the large-scale statistics, allows optimizing the transmit power and the RBM only on every several coherence when these statistics change. We verified the tightness of the analytical expression by  simulations even for practical system dimensions. Moreover, we shed light on the impact of HIs on the asymptotic max-min weighted SINR by varying the hardware quality at both the IRS and the transceiver.

	\begin{appendices}
				\section{Proof of Lemma~\ref{lemmaDerivative}}\label{Lemma1}
		After adding and subtracting $ 1 $ in the numerator of \eqref{SINRDE1}, the partial derivative of $\bar{\tau}^{\star} $ is written as
		\begin{align}
			\pdv{\bar{\tau}^{\star}}{\phi_{n}^{*}} =\frac{p_{k}}{\eta_{k}}\frac{1}{\left(1+\kappa_{\mathrm{UE}}\delta_{k}\right)^{2}}\pdv{\delta_{k}}{\phi_{n}^{*}}.\label{deriv3}
		\end{align}
		We have
		\begin{align}
			&\pdv{\delta_{k}}{\phi^{*}_{n}}=\frac{1}{M}\tr\left(\pdv{\bR_{k}}{\phi^{*}_{n}}\bT+\bR_{k}\pdv{\bT}{\phi^{*}_{n}}\right)\label{der1}\\
			&=\frac{1}{M}\tr\bigg(\pdv{\bR_{k}}{\phi^{*}_{n}}\bT\bigg)-\frac{1}{M}\tr\bigg(\bR_{k}\bT\pdv{ \bT^{-1}}{\phi^{*}_{n}} \bT \bigg)\label{der2}\\
			&\!=\!\frac{1}{M}\tr\!\bigg(\!\pdv{\bR_{k}}{\phi^{*}_{n}}\bT\bigg)\!-\!\frac{1}{M}\kappa_{\mathrm{BS}}\tr\!\bigg(\!\bR_{k}\bT\bigg(\!\sum_{i=1 }^{K}\!\frac{p_{i}}{M}\!\pdv{\big( \Id_{M}\!\circ\!\bR_{i}\big)}{\phi^{*}_{n}}\! \bT \!\bigg)\nn\\
			&-\!\frac{1}{M}\!\tr\!\bigg(\!\bR_{k}\bT\bigg(\!\sum_{i\ne k}^{K}\!\frac{p_{i}\left(1\!+\! \kappa_{\mathrm{UE}}\right)}{M\left(1+\delta_{i}\right)}\big(\!\pdv{ \bR_{i}}{\phi^{*}_{n}} \!-\!\frac{\pdv{\delta_{i}}{\phi^{*}_{n}}}{\left(1\!+\!\delta_{i}\right)}\bR_{i}\!\big) \!\bT \!\bigg)\!
			,\label{del1}
		\end{align}\\
	where, 
	in \eqref{der2}, we  used 
the derivative of the inverse matrix $ \bT $.
		
		Use of \cite[Lem. 1]{Papazafeiropoulos2021}
		in \eqref{del1} and substitution into \eqref{deriv3} concludes the proof after several algebraic manipulations.
	\end{appendices}
	\bibliographystyle{IEEEtran}
	
	\bibliography{mybib}
\end{document}